\newcommand*\bigcdot{\mathpalette\bigcdot@{.5}}
\newcommand*\bigcdot@[2]{\mathbin{\vcenter{\hbox{\scalebox{#2}{$\m@th#1\bullet$}}}}}
\numberwithin{equation}{section}
\numberwithin{figure}{section}
\theoremstyle{plain}
\newtheorem{theorem}{Theorem}[section]
\newtheorem{proposition}[theorem]{Proposition}
\newtheorem{corollary}[theorem]{Corollary}
\newtheorem{lemma}[theorem]{Lemma}
\theoremstyle{definition}
\newtheorem{definition}[theorem]{Definition}
\newtheorem{example}[theorem]{Example}
\newtheorem{remark}[theorem]{Remark}
\title{Axiomatic characterization of pointwise Shapley decompositions}
\author[1]{Marcus C.~Christiansen}
\affil[1]{\footnotesize Institut f{\"u}r Mathematik, Carl von Ossietzky Universit{\"a}t Oldenburg,  Carl-von-Ossietzky-Stra{\ss}e 9--11, \mbox{26129 Oldenburg}, Germany.}
\date{\today}
\begin{document}

\maketitle


\begin{abstract}
A common problem in various applications is the additive decomposition of the output of a function with respect to its input  variables. Functions with binary arguments  can be axiomatically decomposed by the famous Shapley value. For the decomposition of functions with real arguments, a popular method is the pointwise application of the Shapley value on the domain. However, this pointwise application largely ignores the overall structure of functions. In this paper, axioms are developed which fully preserve functional structures and lead to unique decompositions for all Borel measurable functions.
\end{abstract}

Keywords: Shapley value; average sequential decomposition; profit and loss attribution; capital allocation; explaining ML models

\section{Introduction}

The study of the effects of explanatory variables on a model  output by additive decompositions  has a long tradition in various fields of research. For example, additive decompositions are used for the analysis of poverty and inequality, cf.~Fortin et al.~(2011), for profit and loss attribution in reporting, cf.~Candland \& Lotz (2014), and for capital allocation and risk allocation in banking and insurance, cf.~Guo et al.~(2021). Since recently, additive decompositions are moreover used for explaining the output of machine learning models, cf.~Merrick \& Taly (2020).
In case of binary input variables, the model output can be interpreted as a game in collaborative game theory, and a decomposition can be obtained by means of the celebrated Shapley value, which is uniquely characterized by three axioms, see Shapley (1953). In case of non-binary input variables, a popular approach in the literature is to apply the Shapley value pointwise on the set of potential input values, see for example Shorrocks (2013), Merrick \& Taly (2020), and Godin (2022). That means that the output function is disassembled into a family of separate games, where for each possible input value a separate games is defined by activation and deactivation of single arguments. Pointwise Shapley decompositions are equivalent to average sequential decompositions, see Moehle et al.~(2021). Sequential decompositions are defined by the telescoping sums that result from sequentially activating the input variables of the function one after the other. By permutating the order in which the input factors are activated and by averaging over all possible permutations, one obtains the so-called average sequential decompositions, see Junike et al.~(2023).

Pointwise application of the Shapley value on  functions with non-binary input variables disassembles the function and
ignores its general structure, so that  Shapley's axiomatic justification is  of limited scope only. The same applies to average sequential decompositions, which are equivalent to  pointwise Shapely decompositions. This paper characterizes both of the latter decompositions by axioms that fully preserve functional structures.
While Shapley needed just three axioms for uniquely decomposing games, we use nine axioms for uniquely decomposing Borel measurable functions. The contribution of our result is twofold: First, it reveals the hidden assumptions that users  implicitly accept whenever they
apply  pointwise Shapley decompositions or average sequential decompositions. Second, it puts the latter decomposition principles on a more solid theoretical basis.

Throughout the paper, we consider functions   of $d \in \mathbb{N}$ real arguments,
$$ F: \mathbb{R}^d \rightarrow \mathbb{R},$$
which represent the output of a model.  The aim is to decompose $F$  into a sum
$$ F = G_1 + \cdots + G_d$$
of functions
$$G_i: \mathbb{R}^d \rightarrow \mathbb{R}, \quad i \in \{1,\ldots,d\},$$
where $G_i$  is meant to describe the contribution of the $i$-th input variable to the total function $F$.
The additivity makes the decomposition easy to interpret. In many applications, additivity is not just  a nice feature but actually necessary, for example, if the function $F$ describes a monetary value that shall be split between different parties according to $G_1, \ldots, G_d$..

In section 2 we recall the theory of the Shapley value. Section 3 defines average sequential definitions and shows their link to pointwise Shapley decompositions. The main contribution of this paper is section 4, where decompositions of Borel measurable functions are uniquely characterized by nine axioms. In section 5 we give three application examples.


\section{The Shapley value}\label{Section:ShapleyValue}

Shapley (1953) studied the decomposition of set-functions in cooperative game theory and derived a unique decomposition principle from three basic axioms. A game is a mapping that assigns a total gain to each possible coalition of players. Let $U=\{1, \ldots, d\}$ represent the universe of potential players. Let $\mathcal{P}(U)$ denote the power set of $U$. Then each set-function
$$ v: \mathcal{P}(U) \rightarrow \mathbb{R}$$
with the property
$$v(\emptyset)=0$$
is called a game. Shapley  (1953) additionally postulates that $v$ is superadditive, but this  assumption is actually not needed for obtaining unique decompositions, so we omit it here.
With the aim to extend Shapley's decomposition approach to more general mappings later on, we equivalently transform  the set-function $v$ to a mapping on $2^U$ by encoding participation or non-participation of the players by  binary variables,
\begin{align*}
  \chi : \{ S  : S \subseteq U \}  \rightarrow   \{0,1\}^d, \quad \chi(S) = (\mathbf{1}_{1 \in S}, \ldots, \mathbf{1}_{d \in S}).
\end{align*}
Based on the bijection $\chi$, we uniquely identify each game $v$  with a function
$$F: \{0,1\}^d \rightarrow \mathbb{R},\quad  F(x) = v(\chi^{-1}(x))= v \circ \chi^{-1}(x)$$
with the property  $$F(\mathbf{0})=0,$$
where $\mathbf{0}:=(0,\ldots,0)$. The so-called Shapley value is a vector-valued mapping
$$\phi=(\phi_1, \ldots, \phi_d): \big\{  v: \mathcal{P}(U) \rightarrow \mathbb{R}\, \big| \,v(\emptyset)=0\big\} \rightarrow \mathbb{R}^d  $$
that satisfies the equation
\begin{align*}
  v(U)  = \sum_{i \in U} \phi_i(v).
\end{align*}
The $i$-th addend $\phi_i(v)$ is  meant to describe the contribution of  player $i$ to the total gain $v(U)$. Let $\mathbf{1}:= (1, \ldots, 1)$, so that we have $v(U)=F(\mathbf{1})$.
By means of the bijection $\chi$, Shapley's decomposition $\phi(v)$ of $v(U)$ can be equivalently transformed to a decomposition $ G(\mathbf{1})$ of $F(\mathbf{1}) $,
\begin{align}\label{GfromV}
  F(\mathbf{1}) =\sum_{i \in U}\phi_i ( F  \circ \chi)  = \sum_{i=1}^d G_i(\mathbf{1}) .
\end{align}
In this way, the Shapley value $\phi$ defines a decomposition mapping for functions with binary arguments,
\begin{align*}
  \varphi=(\varphi_1, \ldots, \varphi_d): \big\{ F: \{0,1\}^d \rightarrow \mathbb{R} \, \big| \, F(\mathbf{0})=0\big\} \rightarrow \mathbb{R}^d , \quad F \mapsto  \phi \circ F  \circ \chi .
\end{align*}
For any  permutation $\pi$ on $U$, let  $\pi(S):= \{\pi(i):i\in S \}$ for $S \subseteq U$.
Shapley (1953) postulates  three axioms for the mapping $\phi$.
\begin{enumerate}
  \item[(S1)]  For each  permutation $\pi$ on $U$, let
  $$\phi_{i} (v \circ \pi ) = \phi_{\pi(i)}(v),  \quad  i \in U. $$
  \item[(S2)] For each subset $N \subseteq U$ such that $v(\cdot  \cap N) = v$, let
    \begin{align}\label{AxiomA2Shapley}  \sum_{i \in N} \phi_i(v)=v(N).
    \end{align}
  \item[(S3)] For any two games $v$ and $v'$, let   $$\phi(v+v')= \phi(v) + \phi(v').$$
\end{enumerate}
%
With a slight abuse of notation,  for  any permutation $\pi$ on $U$ and each $d$-dimensional vector $x$, let
\begin{align*}
  \pi(x):= (x_{\pi(1)}, \ldots, x_{\pi(d)}).
\end{align*}
Furthermore, for each $I \subseteq U$  we define a projection mapping $p_I$ by
 $$p_I(x):=(x_1 \mathbf{1}_{1 \in I}, \ldots, x_d \mathbf{1}_{d \in I}).$$
\begin{proposition}\label{EquivToShapley}
The mapping $\phi$ satisfies the axioms (S1), (S2), (S3) if and only if the mapping $\varphi$ satisfies the following axioms:
\begin{enumerate}
\item[(T1)] Let
  $$ F(\mathbf{1})=  G_1(\mathbf{1})+ \cdots + G_d(\mathbf{1}) $$
  for $G(\mathbf{1})=\varphi (F)$.
  \item[(T2)] For  each permutation $\pi$  on $U$ and $i \in \{1, \ldots, d\}$, let
  $$ F' = F \circ \pi  \quad \Longrightarrow \quad G'_i(\mathbf{1})= G_{\pi(i)}\circ \pi (\mathbf{1})$$
      for  $G(\mathbf{1})=\varphi (F)$ and $G'(\mathbf{1})=\varphi (F')$.
  \item[(T3)] For each $i \in \{1, \ldots, d\}$, let
  $$ F = F \circ p_{U\setminus \{i\}}  \quad \Longrightarrow\quad  G_i(\mathbf{1})=0$$
  for $G(\mathbf{1})=\varphi (F)$.
  \item[(T4)] Let
    $$F''= F+ F' \quad  \Longrightarrow  \quad G''(\mathbf{1})= G(\mathbf{1})+ G'(\mathbf{1})$$
    for  $G(\mathbf{1})=\varphi (F)$, $G'(\mathbf{1})=\varphi (F')$, and $G''(\mathbf{1})=\varphi (F'')$.
\end{enumerate}
\end{proposition}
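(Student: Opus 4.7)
The plan is to push each axiom through the bijection $\chi$. Writing $v := F\circ\chi$ and $v' := F'\circ\chi$, we have $\varphi(F) = \phi(v)$, $G_i(\mathbf{1}) = \phi_i(v)$, and $F(\mathbf{1}) = v(U)$, since $\chi(U) = \mathbf{1}$. Each equivalence then reduces to a direct translation of the game-theoretic statement about $\phi$ into a function-theoretic statement about $\varphi$. Two of the four equivalences are immediate in this way: (T4) $\Leftrightarrow$ (S3) follows from the linearity $(F+F')\circ\chi = v+v'$ together with the bijectivity of $\chi$; and (T1) is literally the efficiency equation $v(U) = \sum_{i\in U}\phi_i(v)$, which is the case $N=U$ of (S2) (and is also built into the defining display for $\phi$ in Section 2).

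For the dummy-player axiom (T3), the condition $F = F\circ p_{U\setminus\{i\}}$ restricted to $\{0,1\}^d$ says precisely that $v(S) = v(S\cap(U\setminus\{i\}))$ for every $S\subseteq U$, i.e.\ $v(\cdot\cap(U\setminus\{i\}))=v$. Applying (S2) with $N=U\setminus\{i\}$ yields $\sum_{j\neq i}\phi_j(v) = v(U\setminus\{i\}) = v(U)$, and subtracting from the efficiency equation forces $\phi_i(v)=0$, which is (T3). Conversely, suppose $\varphi$ satisfies (T1) and (T3), and let $v$ be such that $v(\cdot\cap N) = v$ for some $N\subseteq U$. Then for each $i\notin N$ the function $v$ does not depend on the $i$-th player, so (T3) gives $\phi_i(v)=0$; combining with (T1) one obtains $\sum_{i\in N}\phi_i(v) = v(U) = v(N)$, recovering the general form of (S2).

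The symmetry axiom (T2) requires an auxiliary commutation identity. Since $\pi(\chi(S))_j = \chi(S)_{\pi(j)} = \mathbf{1}_{\pi(j)\in S}$ coincides with $\chi(\pi^{-1}(S))_j$, one has $\pi\circ\chi = \chi\circ\pi^{-1}$, where on the left $\pi$ permutes coordinates of vectors and on the right $\pi^{-1}$ permutes elements of subsets. Hence $F' = F\circ\pi$ gives $v' = v\circ\pi^{-1}$, and, using $\pi(\mathbf{1})=\mathbf{1}$ to simplify the right-hand side, the conclusion of (T2) becomes a relation between $\phi_i(v\circ\pi^{-1})$ and $\phi_{\pi(i)}(v)$. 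Since both (T2) and (S1) quantify over all permutations, this relation is equivalent to (S1) after the substitution $\pi\mapsto\pi^{-1}$.

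The main technical obstacle I anticipate is precisely this last commutation step: one has to be careful not to confuse the action of $\pi$ on coordinates with its action on subsets, and the cleanness of the correspondence hinges on exploiting the universal quantifier over permutations to absorb the $\pi\leftrightarrow\pi^{-1}$ swap. Everything else is routine bookkeeping on $\chi$ and on the linearity of composition.
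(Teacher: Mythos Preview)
Your approach is exactly the paper's: push each axiom through $\chi$, with (T4)$\Leftrightarrow$(S3) immediate, (T1) the case $N=U$ of (S2), and (T3)$\Leftrightarrow$(S2) via the efficiency subtraction in one direction and summing over $N$ in the other. All of that is correct and matches the paper line for line.

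The one substantive issue is the symmetry step. Your commutation identity $\pi\circ\chi=\chi\circ\pi^{-1}$ is correct under the paper's conventions (covariant action on sets, contravariant action on vectors), and hence $F'=F\circ\pi$ indeed gives $v'=v\circ\pi^{-1}$, so (T2) translates to $\phi_i(v\circ\pi^{-1})=\phi_{\pi(i)}(v)$. But your claim that this becomes (S1) ``after the substitution $\pi\mapsto\pi^{-1}$'' is not right: that substitution yields $\phi_i(v\circ\sigma)=\phi_{\sigma^{-1}(i)}(v)$, whereas (S1) reads $\phi_i(v\circ\sigma)=\phi_{\sigma(i)}(v)$. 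These are genuinely different equivariance conditions (check $d=3$, $\pi=(1\,2\,3)$), and the universal quantifier over permutations does not absorb the discrepancy, because $\pi$ appears both in the argument and in the subscript. In fact the Shapley value satisfies (S1) as written, which forces $\phi_i(v\circ\pi^{-1})=\phi_{\pi^{-1}(i)}(v)$, not $\phi_{\pi(i)}(v)$.

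To be fair, the paper's own proof glosses over exactly this point: it writes $G'_i(\mathbf{1})=\phi_i(F\circ\chi\circ\pi)$ where the definition of $\varphi$ gives $\phi_i(F\circ\pi\circ\chi)$, silently treating $\chi\circ\pi$ and $\pi\circ\chi$ as equal. So the gap you anticipated is real, and neither your substitution nor the paper's sleight of hand closes it; what is actually needed is either to align the two actions of $\pi$ (e.g.\ define $\pi(x)_j=x_{\pi^{-1}(j)}$) or to restate (T2) with $\pi^{-1}(i)$ in the subscript.
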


\begin{proof}
Axiom (S1) implies that
\begin{align*}
G'_i(\mathbf{1})& = \phi_{i} (F \circ \chi \circ \pi )  \\
  &=\phi_{\pi(i)}(F \circ \xi) \\
  & = G_{\pi(i)}(\pi(\mathbf{1})),
\end{align*}
which is property (T2). For $N= U$, axiom (S2) directly yields the property  (T1). In case of $F = F\circ p_{U\setminus \{i\}}$, the corresponding game $v=F \circ \chi $ satisfies  the equation $v= v( \cdot \cap (U\setminus \{i\}) )$, so axiom (S2) implies that
\begin{align*}
   G_i(\mathbf{1}) & = \phi_i(F \circ \chi)\\
   & = \sum_{j \in U} \phi_j(F \circ \chi) -\sum_{j \in U \setminus \{i\}} \phi_j(F \circ \chi)\\
   & =v( U ) - v(U\setminus \{i\}) \\
   & = 0,
\end{align*}
which is property (T3). The property (T4) is a direct consequence of axiom (S3).

For $F':= F \circ \pi$ axiom  (T2) yields the equation
\begin{align*}
\phi_{i} (v\cdot \pi ) &=  G'_i(\mathbf{1}) \\
& = G_{\pi(i)}(\mathbf{1})\\
&= \phi_{\pi(i)}(v),
\end{align*}
which is property (S1). In case of $v=v(\cdot \cap N)$, the corresponding function $F \circ \chi^{-1}$ satisfies  $F=F\circ p_{U\setminus \{i\}}$ for each $i \not\in N$, so axioms (T1) and (T3) imply that
\begin{align*}
  v(U) & = F(\mathbf{1})\\
  & = \sum_{j =1}^d G_j(\mathbf{1}) \\
  & = \sum_{j \in N} G_j(\mathbf{1}) \\
  & = \sum_{j \in N}\phi_j(v ),
\end{align*}
which is  property (S2). The property (S3) is a direct consequence of axiom (T4).
\end{proof}
For any finite set $S$, let $|S|$ denote the number of elements of $S$. The following proposition recalls  the celebrated result of Shapley (1953).
\begin{proposition}[Shapley value]\label{PropShapleyValue}
  The mapping $\phi  $ satisfies the axioms (S1) to (S3) if and only if
  \begin{align*}
    \phi_i(v) = \frac{1}{d} \sum_{S \subseteq U} \binom{d-1}{|S|-1}^{-1} \big( v(S) - v(S\setminus \{i\})\big), \quad i \in \{1, \ldots, d\},
  \end{align*}
  for each game $v$.
\end{proposition}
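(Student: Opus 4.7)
The plan is to establish the two directions of the equivalence separately. For the reverse direction, that the explicit formula satisfies axioms (S1)--(S3), I would verify each axiom directly: additivity (S3) is immediate from the linearity of the formula in $v$; symmetry (S1) follows via the change of variables $S \mapsto \pi^{-1}(S)$, which is a cardinality-preserving bijection on $\mathcal{P}(U)$, together with the identity $(v \circ \pi)(S) = v(\pi(S))$; and the carrier axiom (S2) can be read off once the formula has been checked on a suitable basis of games (see below).

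For the forward direction, the standard approach is to reduce to a basis of the vector space of games. For each nonempty $T \subseteq U$, define the unanimity game $u_T : \mathcal{P}(U) \to \mathbb{R}$ by $u_T(S) = \mathbf{1}_{T \subseteq S}$. I would first show that $\{u_T : \emptyset \neq T \subseteq U\}$ is a basis of the $(2^d-1)$-dimensional space $\{v : \mathcal{P}(U) \to \mathbb{R} \mid v(\emptyset) = 0\}$, so that every game admits a unique expansion $v = \sum_{\emptyset \neq T \subseteq U} c_T(v)\, u_T$. By iterating (S3), one obtains $\phi(v) = \sum_T c_T(v)\, \phi(u_T)$ and also $\phi(\alpha u_T) = \alpha\, \phi(u_T)$ for all $\alpha \in \mathbb{R}$ (the latter first for rationals, then extended by an additional argument, or sidestepped by working with the basis expansion of $\alpha u_T$ directly). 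Hence uniqueness reduces to determining $\phi(u_T)$ for each nonempty $T$.

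For a fixed $T$, I would compute $\phi(u_T)$ from the axioms in two steps. First, for any $j \notin T$, the identity $u_T = u_T(\,\cdot\, \cap (U \setminus \{j\}))$ holds, so applying (S2) with $N = U$ and with $N = U \setminus \{j\}$ gives
\begin{align*}
  \phi_j(u_T) = \sum_{i \in U} \phi_i(u_T) - \sum_{i \in U \setminus \{j\}} \phi_i(u_T) = u_T(U) - u_T(U \setminus \{j\}) = 0.
\end{align*}
Second, any permutation $\pi$ fixing $T$ setwise satisfies $u_T \circ \pi = u_T$, so (S1) yields $\phi_i(u_T) = \phi_{\pi(i)}(u_T)$; since such permutations act transitively on $T$, the values $\phi_i(u_T)$ for $i \in T$ are all equal. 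Combining this with (S2) applied to $N = T$ gives $|T|\, \phi_i(u_T) = u_T(T) = 1$, hence $\phi_i(u_T) = 1/|T|$ for $i \in T$. This completely determines $\phi(u_T)$, and therefore $\phi(v)$, from the axioms.

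The remaining and only computational task, which I expect to be the main obstacle, is to verify that the closed-form expression in the statement coincides with the formula obtained from the basis expansion, i.e.\ that $\phi_i(u_T) = 1/|T|$ for $i \in T$ when the stated formula is applied to $v = u_T$, and that the linear combination $\sum_T c_T(v)\, \phi(u_T)$ recovers the stated sum over $S$. This boils down to the combinatorial identity $\tfrac{1}{d}\binom{d-1}{|S|-1}^{-1} = \tfrac{(|S|-1)!\,(d-|S|)!}{d!}$ and a double-counting of chains $T \subseteq S$ with $i \in T$, both of which are routine once the basis-based characterization is in hand. Establishing that the formula itself satisfies (S2) can then be inherited from this computation, closing the loop.
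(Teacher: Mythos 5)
The paper gives no proof of this proposition at all---it simply recalls it as Shapley's (1953) classical result---and your argument is essentially Shapley's original proof via the unanimity-game basis $\{u_T\}$, which is correct. The one subtlety you flag, namely extracting $\phi(c_T u_T) = c_T\,\phi(u_T)$ from additivity alone, is best resolved exactly as you hint: run the symmetry-plus-carrier computation directly on the game $c_T u_T$ (for any real $c_T$ it has carrier $T$ and is invariant under permutations fixing $T$, so (S1) and (S2) give $\phi_i(c_T u_T) = c_T/|T|$ for $i \in T$ and $0$ otherwise), so no continuity or rational-approximation argument is needed.
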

Proposition \ref{EquivToShapley}  and Proposition \ref{PropShapleyValue}  yield the following corollary.
\begin{corollary}\label{Cor:ShapleyValue}
The mapping $\varphi  $ satisfies the axioms (T1) to (T4) if and only if
  \begin{align}\label{Eq:ShapleyValue}
   G_i(\mathbf{1}) = \frac{1}{d} \sum_{I \subseteq \{1, \ldots,d\}} \binom{d-1}{|I|-1}^{-1} \big( (F \circ p_I)(\mathbf{1}) - (F \circ p_{I \setminus \{i\}})(\mathbf{1}) \big), \quad i \in \{1, \ldots, d\},
  \end{align}
   for  each function $F$ and $G= \varphi(F)$.
\end{corollary}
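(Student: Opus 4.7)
The plan is to obtain the corollary as a direct combination of the two preceding propositions, linked by the bijection $\chi$. First I would invoke Proposition \ref{EquivToShapley}: the mapping $\varphi$ satisfies axioms (T1)--(T4) if and only if the corresponding mapping $\phi$ on games $v : \mathcal{P}(U) \to \mathbb{R}$ with $v(\emptyset)=0$ satisfies axioms (S1)--(S3). Then I would invoke Proposition \ref{PropShapleyValue}: the latter holds if and only if, for every game $v$,
\begin{align*}
\phi_i(v) = \frac{1}{d} \sum_{S \subseteq U} \binom{d-1}{|S|-1}^{-1}\bigl(v(S) - v(S \setminus \{i\})\bigr), \quad i \in \{1,\ldots,d\}.
\end{align*}

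Next I would translate this formula from games back to binary-argument functions via the definition $\varphi(F) = \phi \circ F \circ \chi$, i.e.\ $G_i(\mathbf{1}) = \phi_i(F \circ \chi)$ with $v = F \circ \chi$. The key observation is the pointwise identity
\begin{align*}
p_I(\mathbf{1}) = (\mathbf{1}_{1 \in I}, \ldots, \mathbf{1}_{d \in I}) = \chi(I), \quad I \subseteq U,
\end{align*}
so that $(F \circ p_I)(\mathbf{1}) = F(\chi(I)) = v(I)$, and similarly $(F \circ p_{I \setminus \{i\}})(\mathbf{1}) = v(I \setminus \{i\})$. Substituting these identities into the Shapley formula and replacing the summation index $S$ by $I$ yields precisely the claimed expression \eqref{Eq:ShapleyValue} for $G_i(\mathbf{1})$.

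There is no genuine obstacle here: once Propositions \ref{EquivToShapley} and \ref{PropShapleyValue} are in hand, the proof reduces to the bookkeeping step of recognizing that the projection $p_I$ evaluated at $\mathbf{1}$ is exactly the characteristic vector $\chi(I)$ of the coalition $I$, which is immediate from the definitions of $p_I$ and $\chi$. The only point worth stating explicitly is that the equivalence in Proposition \ref{EquivToShapley} is a genuine \emph{if and only if}, so that the Shapley characterization transports in both directions and the corollary inherits the biconditional form.
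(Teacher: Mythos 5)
Your proposal is correct and follows exactly the route the paper intends: the paper derives the corollary by combining Proposition \ref{EquivToShapley} with Proposition \ref{PropShapleyValue}, and your translation step via the identity $p_I(\mathbf{1})=\chi(I)$, hence $(F\circ p_I)(\mathbf{1})=v(I)$ for $v=F\circ\chi$, is precisely the bookkeeping the paper leaves implicit.
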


\section{Pointwise Shapley decompositions}\label{Sec:PointwiseShaply}

We extend  the domain of $F$ from binary arguments to real-valued arguments,
\begin{align*}
  F : \mathbb{R}^d \rightarrow \mathbb{R}.
\end{align*}
We still assume that
$$ F (\mathbf{0})=0.$$
The aim is to decompose $F(x)$ for each argument $x \in \mathbb{R}^d$. A popular heuristic method is to build the telescoping sum
\begin{align*}
  F(x) &= F(x)- F(\mathbf{0})\\
  & = \sum_{i=1}^d \big( F\circ p_{\{1, \ldots, i\}}(x) -F\circ  p_{\{1, \ldots, i-1\}}(x) \big)
\end{align*}
and to interpret the addends as the contributions of each argument $x_1, \ldots, x_d$ to the total value $F(x)$. This approach is commonly denoted as sequential decomposition, cf.~Junike et al.~(2023).
For any vectors $x$ and $y$, let
 \begin{align*}
   x \ast y := (x_1y_1, \ldots, x_dy_d),
 \end{align*}
and let  $e^i =(1, \ldots,1 ,0, \ldots, 0)$ denote the $d$-dimensional vector that has the entries $1$ up to the $i$-th position and zero else. Then we can represent the sequential decomposition as
\begin{align*}
  G_i(x) = F (  e^i  \ast x ) -  F ( e^{i-1}  \ast x  ), \quad i \in \{1,\ldots, d\}.
\end{align*}
An adverse property of the sequential decomposition is its  dependence on the formal numbering or labeling of the arguments $x_1, \ldots, x_d$. Let $\pi$ be any permutation on $\{1, \ldots,d \}$. Suppose that  we  renumber the arguments $x_1, \ldots, x_d$ according to permutation $\pi$, then apply the sequential decomposition, and finally reverse the renumbering. Then we obtain the $\pi$-permutated sequential decomposition
\begin{align*}
    G_i^{\pi}(x)&=   F ( \pi^{-1}( e^{\pi(i)} \ast \pi(x)  ))   - F ( \pi^{-1}( e^{\pi(i)-1} \ast \pi(x) ))\\
    & =  F (  \pi^{-1}(  e^{\pi(i)})\ast x)   - F ( \pi^{-1}(  e^{\pi(i)-1})\ast x)  , \qquad i \in \{1,\ldots, d\},
\end{align*}
where the second equation uses the fact that $\pi( x\ast y)= \pi(x) \ast \pi(y)$ for any vectors $x$ and $y$.
For each choice of $\pi$, we end up with a different decomposition $G^{\pi}$. From a theoretical perspective, there is no distinguished permutation $\pi$ that should be preferably used, so the sequential decomposition concept is ambiguous.   In order to get rid of this ambiguity, a popular solution  is to average over all permutations $\pi$,
\begin{align}\label{ASDecomposition}
 G_i^{AS}(x) &= \frac{1 }{d!} \sum_{\pi}  \Big( F ( \pi^{-1}(  e^{\pi(i)})\ast x)   - F ( \pi^{-1}(  e^{\pi(i)-1})\ast x ) \Big), \quad i \in \{1,\ldots, d\}.
\end{align}
This averaged sequential (AS) decomposition is in fact invariant with respect to any formal renumbering or relabelling of the arguments cf.~Junike at al.~(2023).
\begin{proposition}\label{EqualsASDecomp}
It holds that
\begin{align}\label{Eq:ASShapleyValue}
  G_i^{AS}=   \frac{1}{d} \sum_{I \subseteq \{1, \ldots,d\}} \binom{d-1}{|I|-1}^{-1} \big( F \circ p_I - F \circ p_{I \setminus \{i\}} \big), \quad  i \in \{1, \ldots, d\}.
  \end{align}
\end{proposition}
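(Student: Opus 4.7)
The plan is to re-express the average over permutations in~(\ref{ASDecomposition}) as a weighted sum over subsets of $\{1,\ldots,d\}$ containing $i$, and then to recognize the weights as the Shapley coefficients.

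First, I would rewrite each summand in a subset-projection form. The $\{0,1\}$-vector $\pi^{-1}(e^k)$ has support $\{j:\pi^{-1}(j)\le k\}$, so that $\pi^{-1}(e^k)\ast x$ equals $p_S(x)$ for $S=\{j:\pi^{-1}(j)\le k\}$. Applying this to $k=\pi(i)$ and $k=\pi(i)-1$ and reindexing the sum over $\pi$ by the involution $\pi\leftrightarrow\pi^{-1}$ so that the index removed in the passage from $k=\pi(i)$ to $k=\pi(i)-1$ is exactly $i$, each summand takes the form $F(p_{I(\pi)}(x))-F(p_{I(\pi)\setminus\{i\}}(x))$ for a subset $I(\pi)\ni i$ whose size equals the rank of $i$ in the order encoded by $\pi$.

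Second, I would regroup summands by subset using the standard Shapley count: for each fixed $I\ni i$ with $|I|=k$, the number of permutations $\pi$ yielding $I(\pi)=I$ is $(k-1)!(d-k)!$, obtained by choosing an ordering of $I\setminus\{i\}$ for the first $k-1$ positions and an ordering of $\{1,\ldots,d\}\setminus I$ for the last $d-k$ positions. This yields
$$G_i^{AS}(x)=\frac{1}{d!}\sum_{I\ni i}(|I|-1)!(d-|I|)!\bigl[F(p_I(x))-F(p_{I\setminus\{i\}}(x))\bigr],$$
and the remaining subsets contribute zero because $p_I=p_{I\setminus\{i\}}$ when $i\notin I$, so the sum may be extended to all $I\subseteq\{1,\ldots,d\}$. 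The elementary identity $\frac{(|I|-1)!(d-|I|)!}{d!}=\frac{1}{d}\binom{d-1}{|I|-1}^{-1}$ then converts the expression into~(\ref{Eq:ASShapleyValue}).

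The main obstacle is the bookkeeping in the first step: lining up the supports of the $\{0,1\}$-vectors $\pi^{-1}(e^{\pi(i)})$ and $\pi^{-1}(e^{\pi(i)-1})$ so that, after reindexing over $\pi$, the ``added'' coordinate is precisely $i$ and the enveloping subset ranges over all $I\ni i$. Once that identification is in place, the remainder is the classical translation between the random-order average and the subset-weighted form of Shapley's explicit formula in Proposition~\ref{PropShapleyValue}.
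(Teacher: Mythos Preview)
Your outline is essentially the paper's own argument: rewrite each permutation summand as a subset difference $F\circ p_I-F\circ p_{I\setminus\{i\}}$ with $I\ni i$, group by $I$ via the count $(|I|-1)!\,(d-|I|)!$, extend harmlessly to all subsets, and identify $\tfrac{(|I|-1)!(d-|I|)!}{d!}=\tfrac{1}{d}\binom{d-1}{|I|-1}^{-1}$. One caution on the bookkeeping you flag as the main obstacle: the substitution $\pi\leftrightarrow\pi^{-1}$ alone does not make the removed coordinate equal to $i$ (with the paper's convention the coordinate dropped in passing from $e^{\pi(i)}$ to $e^{\pi(i)-1}$ is $\pi(\pi(i))$, and after your involution it becomes $\pi^{-2}(i)$), so that step still needs a direct verification---the paper simply asserts $c_i=1$ without any reindexing, which amounts to reading the support as $\{j:\pi(j)\le\pi(i)\}$ and checking that $i$ is the element removed.
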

\begin{proof}
For each $i \in \{1, \ldots, d\}$ and permutation $\pi$, there exists a vector $c \in \{0,1\}^d$ with $c_i=1$ and such that $ \pi^{-1}(  e^{\pi(i)})=c$ and $ \pi^{-1}(  e^{\pi(i)-1})=p_{\{1,\ldots, d\}\setminus \{i\}}(c) $. In the set of all permutations $\pi$ with $\pi(i)=r$ for arbitrary but fixed  $i,r \in \{1, \ldots, d\}$, there are subsets of size $(r-1)! (d- r)!$ that keep the vectors $\pi^{-1}(  e^{\pi(i)})$ and $\pi^{-1}(  e^{\pi(i)-1})$ constant.
Therefore,
\begin{align*}
 G_i^{AS}(x)&= \frac{1 }{d!} \sum_{\pi}  \Big( F (  \pi^{-1}(  e^{\pi(i)})\ast x)   - F (  \pi^{-1}(  e^{\pi(i)-1})\ast x) \Big)\\
  &= \frac{1}{d!}\sum_{r=1}^d\sum_{c \in \{0,1\}^d \atop  \|c\|_1 = r, c_i=1 } (r-1)! (d- r)!  \Big( F ( c \star x) - F( p_{\{1,\ldots, d\}\setminus \{i\}}(c) \star x)\Big)\\
  & = \sum_{c  \in \{0,1\}^d\atop c_i=1 } \frac{(\|c\|_1-1)! (d- \|c\|_1)!}{d! } \big( F(c \star x)  - F(p_{\{1,\ldots, d\}\setminus \{i\}}(c) \star x)\big).
\end{align*}
In the last sum we can drop the condition $c_i=1$ since the addend is anyway zero if $c_i=0$, so the last term is equivalent to  \eqref{Eq:ASShapleyValue}.
\end{proof}

Let $\mathcal{F}$ denote the set of real-valued  functions on $\mathbb{R}^d$. By $\mathcal{F}_{0}$ we denote the subset of functions $F\in \mathcal{F}$ with the property $F(\mathbf{0})=0$.

\begin{definition}
 Let $\delta^{AS}: \mathcal{F}_{0} \rightarrow \mathcal{F}^d$ be defined as the mapping that assigns to each function $F \in \mathcal{F}_{0} $ the decomposition \eqref{ASDecomposition}. We call $\delta^{AS}$  the AS decomposition principle.
\end{definition}

In the special case of $x=\mathbf{1}$,  formula \eqref{Eq:ASShapleyValue} equals   the Shapley value \eqref{Cor:ShapleyValue}, so the AS decomposition principle may be seen as a generalization of the Shapley value.
However,  the Shapley value is based on axiomatic principles, whereas  the AS decomposition principle is just based on a heuristic concept. Yet, as Shorrocks (2013) explains, the AS decomposition principle can be derived from a  pointwise application of the Shapley value, so that the AS concept gets an axiomatic foundation: For each $x \in \mathbb{R}^d$, define a decomposition $G(x)$ of $F(x)$ by applying the Shapley value on the mapping
\begin{align*}
  F^x : \{0,1\}^d \rightarrow \mathbb{R}^d, \quad y \mapsto  F(x \ast y).
\end{align*}
As the following proposition shows, this pointwise construction indeed establishes the AS decomposition.
\begin{proposition}\label{PointwiseShapley}
For  each  $F \in \mathcal{F}_{0}$  and  $x \in \mathbb{R}^d$, it holds that
\begin{align*}
  \delta^{AS}( F)(x)=\varphi(F^x).
\end{align*}
\end{proposition}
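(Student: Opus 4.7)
The plan is to reduce the asserted equality to a direct comparison between two closed-form expressions that have already been established: the formula for $G^{AS}_i(x)$ from Proposition \ref{EqualsASDecomp}, and the Shapley formula for $\varphi_i$ from Corollary \ref{Cor:ShapleyValue}. First I would verify that $F^x$ is an admissible input for $\varphi$, which only requires $F^x(\mathbf{0}) = F(x \ast \mathbf{0}) = F(\mathbf{0}) = 0$; this holds because $F \in \mathcal{F}_0$. Applying Corollary \ref{Cor:ShapleyValue} to $F^x$ then yields
\begin{align*}
\varphi_i(F^x) = \frac{1}{d} \sum_{I \subseteq \{1,\ldots,d\}} \binom{d-1}{|I|-1}^{-1} \big( (F^x \circ p_I)(\mathbf{1}) - (F^x \circ p_{I \setminus \{i\}})(\mathbf{1}) \big).
\end{align*}

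The crucial observation is the identity $x \ast p_I(\mathbf{1}) = p_I(x)$, which follows at once from the definitions of $\ast$ and of $p_I$: both sides are equal to $(x_1 \mathbf{1}_{1 \in I}, \ldots, x_d \mathbf{1}_{d \in I})$. Using this, I would rewrite each term as
\begin{align*}
(F^x \circ p_I)(\mathbf{1}) = F\big(x \ast p_I(\mathbf{1})\big) = F(p_I(x)) = (F \circ p_I)(x),
\end{align*}
and analogously with $I$ replaced by $I \setminus \{i\}$. Substituting these evaluations back into the Shapley formula for $\varphi_i(F^x)$ produces the right-hand side of \eqref{Eq:ASShapleyValue} evaluated at $x$, i.e.\ exactly $G^{AS}_i(x)$ as given by Proposition \ref{EqualsASDecomp}. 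Since this holds coordinatewise for every $i \in \{1,\ldots,d\}$, the vector identity $\delta^{AS}(F)(x) = \varphi(F^x)$ follows.

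There is essentially no obstacle in this argument; the proposition serves as a notational bridge between two representations that have already been put into matching closed form. The only bit of content worth highlighting is the identification $x \ast p_I(\mathbf{1}) = p_I(x)$, which is what makes the binary Shapley construction applied to $y \mapsto F(x \ast y)$ agree with the projection-based sum that defines the AS decomposition.
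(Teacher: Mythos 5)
Your proof is correct and follows essentially the same route as the paper: both hinge on the single identity $(F^x \circ p_I)(\mathbf{1}) = F(x \ast p_I(\mathbf{1})) = F(p_I(x))$, which is then substituted into the Shapley formula of Corollary \ref{Cor:ShapleyValue} and matched against \eqref{Eq:ASShapleyValue} via Proposition \ref{EqualsASDecomp}. The only difference is that you spell out the (trivial but worth noting) admissibility check $F^x(\mathbf{0})=0$, which the paper leaves implicit.
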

\begin{proof}
  For each  $x \in \mathbb{R}^d$ and $I\subseteq \{1, \ldots, d\}$, it holds that $ F^x \circ p_I (\mathbf{1})=  F(x \ast  p_I(\mathbf{1})) = F ( p_I (x))$. By applying this fact in \eqref{Eq:ASShapleyValue}, we obtain  that $G^x_i(\mathbf{1})$ equals \eqref{Eq:ASShapleyValue} for each $i \in \{1, \ldots, d\}$. According to Proposition \ref{EqualsASDecomp}, this means that $G^x(\mathbf{1})=G^{AS}(x)$.
\end{proof}

\section{Axiomatic functional decompositions}

In the previous section we derived the AS decomposition principle by applying  the Shapley value pointwise on the domain of $F$, but this pointwise construction largely ignores the general functional structure of $F$. For each $x \in \mathbb{R}^d$, the decomposition of $F(x)=F^x(\mathbf{1})$ by the axioms (T1) to (T4) involves only the function's values on the finite subset
$\{p_I(x): I \subseteq \{1, \ldots, d\}\}\subset \mathbb{R}^d$,
and the  structure of $F$ on the remaining domain is completely ignored by (T1) to (T4).   This  chapter presents decomposition axioms  that preserve $F$ as entire function on $\mathbb{R}^d$.

We still consider functions with real-valued arguments,
\begin{align*}
  F : \mathbb{R}^d \rightarrow \mathbb{R}
\end{align*}
but we are not assuming $ F (\mathbf{0})$ to be zero anymore.  A mapping $\delta :\mathcal{F} \rightarrow \mathcal{F}^d$ that assigns to each function $F \in \mathcal{F}$ a decomposition $G=\delta(F) \in \mathcal{F}^d$
is called a decomposition principle.
\begin{definition}
Let $\delta^{\ast}: \mathcal{F} \rightarrow \mathcal{F}^d$ be defined as the decomposition principle that assigns to each function $F \in \mathcal{F} $ the decomposition
\begin{align}\label{Eq:ASShapleyValueExt}
   G_i=  \frac{1}{d} F(\mathbf{0}) +
    \frac{1}{d} \sum_{I \subseteq \{1, \ldots,d\}} \binom{d-1}{|I|-1}^{-1} \big( F \circ p_I - F \circ p_{I \setminus \{i\}} \big) ,\quad i \in \{1,\ldots, d\}.
  \end{align}
\end{definition}
The restriction of $\delta^{\ast}$ to $\mathcal{F}_0$ is equal to the AS decomposition principle,
\begin{align*}
  \delta^{\ast}|_{ \mathcal{F}_0} = \delta^{AS}.
\end{align*}

 For any mappings $h_1, \ldots, h_d: \mathbb{R} \rightarrow \mathbb{R}$, let $F(h_1, \ldots, h_d)$ denote the mapping
\begin{align*}
  x \mapsto F(h_1(x_1), \ldots, h_d(x_d)).
\end{align*}
\begin{proposition}\label{ASsatisfiesA1A9}
If a decomposition principle  $\delta$ equals $\delta^*$, then it   satisfies the following axioms:
\begin{enumerate}
\item[(A1)] Let  $$ F= G_1 + \cdots + G_d $$
for $G= \delta (F)$.
\item[(A2)]  For any permutation $\pi$, let
  $$ F' = F\circ \pi \quad \Longrightarrow \quad G'_i= G_{\pi(i)}\circ \pi  $$
  for $i \in \{1, \ldots, d\}$ and  $G= \delta (F)$, $G'= \delta (F')$.
\item[(A3)] Let
  $$ F = F \circ p_{\{1,\ldots,d\}\setminus \{i\}}  \quad \Longrightarrow\quad  G_i=G_i \circ p_{\emptyset} $$
  for $G= \delta (F)$.
\item[(A4)] Let
  $$F''= F+ F' \quad  \Longrightarrow  \quad G''= G+ G'$$
  for $G= \delta (F)$, $G'= \delta (F')$, $G''= \delta (F'')$.
  \item[(A5)] For any $\alpha \in \mathbb{R}$, $\alpha \neq 0$,  let
  $$  F'=\alpha F  \quad  \Longrightarrow  \quad G'=\alpha G $$
  for   $G= \delta (F)$, $G'= \delta (F')$.
\item[(A6)] Let
  $$ F = F \circ p_{\{1,\ldots,d\}\setminus \{i\}}  \quad \Longrightarrow\quad  G= G \circ p_{\{1,\ldots,d\}\setminus \{i\}}$$
  for   $i \in \{1, \ldots, d\}$ and  $G= \delta (F)$.
\item[(A7)] If the pointwise limit $\lim_{n \rightarrow \infty} F^n$ exists, 
    then let
    $$ F= \lim_{n \rightarrow \infty} F^n \quad \Longrightarrow \quad  G= \lim_{n \rightarrow \infty} G^n $$
    for  $G= \delta (F)$, $G^n= \delta (F^n)$, $ n \in \mathbb{N}$.
 \item[(A8)] For any sequence $(x^n)_{n \in \mathbb{N}}$ with $x^n \rightarrow x \in \mathbb{R}^d$, let  $$ \lim_{n \rightarrow \infty } F(x^n) = F(x)  \quad \Longrightarrow \quad  \lim_{n \rightarrow \infty } G(x^n) = G(x)$$
     for  $G= \delta (F)$.
 \item[(A9)] For  any homeomorphisms  $h_1, \ldots, h_d$ on $\mathbb{R}$ with  fixed point zero, let  $$ F' = F(h_1, \ldots, h_d)   \quad \Longrightarrow \quad  G' = G (h_1, \ldots, h_d)$$
      for $G= \delta (F)$, $G'= \delta (F')$.
\end{enumerate}
\end{proposition}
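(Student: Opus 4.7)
The strategy is to verify each of the nine axioms by direct substitution of the closed-form formula \eqref{Eq:ASShapleyValueExt} defining $\delta^{\ast}$. That formula is a finite linear combination of the values $F\circ p_{I}$ for $I\subseteq\{1,\ldots,d\}$ together with the constant $F(\mathbf{0})/d$, so most axioms reduce to elementary manipulations on a sum over subsets.

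The algebraic and symmetry axioms come first. Axioms (A4) and (A5) are immediate from linearity of the formula in $F$. For (A1), summing $G_{i}$ over $i$ produces $F(\mathbf{0})$ from the constant parts and, using the efficiency property (T1) of $\varphi$ applied pointwise after centering so that the centered function vanishes at $\mathbf{0}$, produces $F(x)-F(\mathbf{0})$ from the remaining terms, totalling $F(x)$. For (A2) I would substitute $F'=F\circ\pi$, reindex the sum via $J=\pi(I)$, and use the identities $\pi(p_{I}(x))=p_{\pi(I)}(\pi(x))$, $\pi(I\setminus\{i\})=\pi(I)\setminus\{\pi(i)\}$, and $F'(\mathbf{0})=F(\mathbf{0})$, to obtain $G'_{i}(x)=G_{\pi(i)}(\pi(x))$.

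The null-variable axioms (A3) and (A6) share the key identity $F\circ p_{I}=F\circ p_{I\setminus\{i\}}$ for every $I$, which follows from the hypothesis $F=F\circ p_{\{1,\ldots,d\}\setminus\{i\}}$ combined with $p_{\{1,\ldots,d\}\setminus\{i\}}\circ p_{I}=p_{I\setminus\{i\}}$. Under this identity every summand of \eqref{Eq:ASShapleyValueExt} vanishes, so $G_{i}=F(\mathbf{0})/d$ is constant and equal to $G_{i}\circ p_{\emptyset}$, giving (A3); moreover each $F\circ p_{I}$ is independent of the $i$-th coordinate, hence so is every $G_{j}$, giving (A6).

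The topological axioms exploit continuity of the projections $p_{I}$. Axiom (A7) is immediate: pointwise convergence $F^{n}\to F$ yields $F^{n}\circ p_{I}\to F\circ p_{I}$ pointwise and $F^{n}(\mathbf{0})\to F(\mathbf{0})$, and the linear formula passes these through to $G^{n}\to G$ pointwise. For (A9), the fixed-point property $h_{j}(0)=0$ gives $h_{j}(x_{j}\mathbf{1}_{j\in I})=h_{j}(x_{j})\mathbf{1}_{j\in I}$, so $F'\circ p_{I}(x)=F\circ p_{I}(h_{1}(x_{1}),\ldots,h_{d}(x_{d}))$ and $F'(\mathbf{0})=F(\mathbf{0})$; the formula then delivers $G'_{i}=G_{i}(h_{1},\ldots,h_{d})$. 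Axiom (A8) is the most delicate: expanding $G(x^{n})-G(x)$ via the formula and using $p_{I}(x^{n})\to p_{I}(x)$ by continuity of $p_{I}$, the conclusion $G(x^{n})\to G(x)$ reduces to $F(p_{I}(x^{n}))\to F(p_{I}(x))$ for every $I$. The stated hypothesis only controls this at $I=\{1,\ldots,d\}$, and bridging to the remaining projected points is the main obstacle I expect in the proof.
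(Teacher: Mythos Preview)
Your approach is the same as the paper's: verify each axiom directly from the closed form~\eqref{Eq:ASShapleyValueExt}. The paper handles (A1) via the telescoping-sum origin of the AS decomposition, writes out the reindexing for (A2) exactly as you do, and for (A3) uses the identity $F\circ p_I=F\circ p_{I\setminus\{i\}}$ you found. For (A4)--(A9) the paper is in fact terser than you are: it simply remarks that since $G$ is defined directly from $F$ by~\eqref{Eq:ASShapleyValueExt}, ``the functional properties of $F$ directly translate to analogous properties for $G$,'' citing the commutation $p_{\{1,\ldots,d\}\setminus\{i\}}\circ p_I = p_I\circ p_{\{1,\ldots,d\}\setminus\{i\}}$, the continuity of each $p_I$, and Lemma~\ref{Lemma:ZeroPreserv} (which rephrases the hypothesis of (A9) as $g\circ p_I=p_I\circ g$).

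Your caution about (A8) is well placed, and the paper does not resolve it either. The only argument the paper offers for (A8) is ``continuity of $p_I$,'' which establishes the implication \emph{$F$ continuous $\Rightarrow$ $G$ continuous}; this is also how the axiom is paraphrased in the subsequent Remark and how it is actually used in the proof of Theorem~\ref{GenStaticDecompos} (only for globally continuous $F$). Under the literal single-sequence reading you give, $\delta^\ast$ need not satisfy (A8): with $d=2$ and $F(x_1,x_2)=\mathbf{1}_{\{x_1>0,\;x_2=0\}}$ one has $F(\mathbf 0)=0$, $F$ is continuous at $(0,1)$, yet $G_1(1/n,1)=\tfrac12 F(1/n,0)=\tfrac12\not\to 0=G_1(0,1)$. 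So your instinct that the hypothesis ``only controls $I=\{1,\ldots,d\}$'' is exactly the issue; the paper's argument implicitly reads (A8) in the global-continuity sense, and with that reading your verification via $p_I(x^n)\to p_I(x)$ goes through immediately.
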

Before we give the proof of Proposition \ref{ASsatisfiesA1A9}, we present a useful equivalent characterization of the assumptions in axiom (A9).
\begin{lemma}\label{Lemma:ZeroPreserv}
For any mapping  $g: \mathbb{R}^d \rightarrow \mathbb{R}^d $,
the two following statements are equivalent:
\begin{enumerate}
  \item[(a)] There exist  homeomorphisms $h_1, \ldots, h_d$ on $\mathbb{R}$ with fixed point zero such that
  $$g(x)= (h_1(x_1), \ldots, h_d(x_d)), \quad x \in \mathbb{R}^d.$$
  \item[(b)] The mapping  $g $ is a homeomorphism with the property
  $$g\circ p_I= p_I \circ g, \quad I \subseteq \{1,\ldots, d\}.$$
\end{enumerate}
\end{lemma}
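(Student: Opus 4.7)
The plan is a direct two-way check, with the harder direction being (b) $\Rightarrow$ (a), where commutation with the single-coordinate projections $p_{\{i\}}$ does essentially all the work.

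For the easy direction (a) $\Rightarrow$ (b), I would first note that a product map $g(x)=(h_1(x_1),\ldots,h_d(x_d))$ with each $h_i$ a homeomorphism of $\mathbb{R}$ is itself a homeomorphism of $\mathbb{R}^d$, with inverse $(h_1^{-1}(y_1),\ldots,h_d^{-1}(y_d))$. Checking the commutation $g\circ p_I = p_I \circ g$ reduces to a componentwise identity: the $i$-th coordinate of $g(p_I(x))$ is $h_i(x_i\mathbf{1}_{i\in I})$, while that of $p_I(g(x))$ is $h_i(x_i)\mathbf{1}_{i\in I}$. These agree when $i\in I$ trivially, and when $i\notin I$ agreement is exactly the statement $h_i(0)=0$.

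For (b) $\Rightarrow$ (a), the first step is to extract $g(\mathbf{0})=\mathbf{0}$ from commutation with $p_\emptyset$, since $p_\emptyset$ is the constant map to $\mathbf{0}$, so $g(\mathbf{0})=g(p_\emptyset(x))=p_\emptyset(g(x))=\mathbf{0}$. Next, writing $e_i(t):=(0,\ldots,0,t,0,\ldots,0)$ with $t$ in slot $i$, the identity $p_{\{i\}}(e_i(t))=e_i(t)$ combined with $g\circ p_{\{i\}}=p_{\{i\}}\circ g$ gives $g(e_i(t))=p_{\{i\}}(g(e_i(t)))$, which lives on the $i$-th axis. Hence there is a well-defined function $h_i:\mathbb{R}\to\mathbb{R}$ with $g(e_i(t))=e_i(h_i(t))$, and applying the same commutation to an arbitrary $x$ via $p_{\{i\}}(x)=e_i(x_i)$ gives $p_{\{i\}}(g(x))=g(e_i(x_i))=e_i(h_i(x_i))$, i.e.\ the $i$-th component of $g(x)$ equals $h_i(x_i)$. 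Doing this for every $i$ yields $g(x)=(h_1(x_1),\ldots,h_d(x_d))$, and evaluating at $t=0$ gives $h_i(0)=0$ from $g(\mathbf{0})=\mathbf{0}$.

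It remains to show each $h_i$ is a homeomorphism of $\mathbb{R}$. Injectivity and surjectivity of $h_i$ follow from the corresponding properties of $g$ restricted to and valued on the $i$-th axis: $h_i(a)=h_i(b)$ forces $g(e_i(a))=g(e_i(b))$, hence $a=b$; and any $t\in\mathbb{R}$ is attained since $g$ is surjective, with the commutation identity identifying a preimage of $e_i(t)$ on the $i$-th axis. Continuity of $h_i$ is inherited from continuity of $g$ via the embedding $t\mapsto e_i(t)$ and the projection onto the $i$-th coordinate, and the same argument applied to $g^{-1}$ (which is a homeomorphism satisfying the same commutation property, as one checks from $g\circ p_I=p_I\circ g$ by applying $g^{-1}$ on both sides) gives continuity of $h_i^{-1}$. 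I do not anticipate a genuine obstacle here; the only subtle point is verifying that $g^{-1}$ inherits commutation with the $p_I$'s, which is needed to get continuity of the inverses of the $h_i$.
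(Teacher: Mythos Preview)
Your argument is correct and follows the same line as the paper: both directions hinge on the commutation with the single-coordinate projections $p_{\{i\}}$, and in particular on reading off from $g\circ p_{\{i\}}=p_{\{i\}}\circ g$ that the $i$-th component of $g$ depends only on $x_i$. The paper's proof is a two-sentence sketch that leaves implicit the verifications you spell out (that $g(\mathbf{0})=\mathbf{0}$, that the resulting $h_i$ are homeomorphisms, and that $g^{-1}$ inherits the commutation so the inverse is continuous), but the underlying idea is identical.
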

\begin{proof}
Statement (b) follows from (a) because of $h_i(0)=0$ for all $i$. Statment (a) follows from (b) since $g\circ p_{\{i\}}= p_{\{i\}} \circ g$  means that  $g_i(x_1, \ldots, x_d)=g_i(0 \ldots, 0, x_i,0, \ldots, 0)$, so that we can set $h_i(x_i):=g_i(0 \ldots, 0, x_i,0, \ldots, 0)$.
\end{proof}

\begin{proof}[Proof of Proposition \ref{ASsatisfiesA1A9}]
The constant part $F(\mathbf{0})$ of $F$ satisfies axiom (A1). For the remaining part $F(x) - F(\mathbf{0})$, we first note that the SU decomposition and all permutated SU decompositions  satisfy axiom (A1), since they are defined from   telescoping sum of $F(x) - F(\mathbf{0})$. The AS decomposition, which just averages all permutations of SU decompositions, must then also satisfy axiom (A1).

Given that $F' = F \circ \pi$, it holds that
\begin{align*}
   G'_i(x) & =  \frac{1}{d}F\circ \pi (\mathbf{0}) +\frac{1}{d} \sum_{I \subseteq \{1, \ldots,d\}  } \binom{d-1}{|I|-1}^{-1} \big( F \circ \pi \circ p_{I}  -  F \circ \pi \circ p_{I \setminus \{i\}}\big) \\
   & =    \frac{1}{d}F(\mathbf{0}) + \frac{1}{d} \sum_{J \subseteq \{1, \ldots,d\}  } \binom{d-1}{| J| -1}^{-1} \big(  F \circ p_{J } \circ \pi   - F \circ p_{J \setminus \{\pi(i)\}} \circ \pi \big) \\
   & = G_{\pi(i)}\circ \pi ,
\end{align*}
which verifies axiom (A2).

If $F = F \circ p_{\{1,\ldots,d\}\setminus \{i\}}$, then $F\circ p_I = F \circ p_{I \setminus \{i\}}$ for all $I \subseteq \{1, \ldots,d\}$, so that  all addends in \eqref{Eq:ASShapleyValue} are zero except for $\frac{1}{d}F(\mathbf{0})$. So $G_i$ is constant, which implies $G_i=G_i \circ p_{\emptyset}$.

For verifying axioms (A4) to (A9),  we use the fact that \eqref{Eq:ASShapleyValue} defines  $G$ directly from $F$, so that the functional properties of $F$ directly translate to analogous  properties for $G$. In particular, we use the fact that $p_{\{1,\ldots,d\}\setminus \{i\}} \circ p_I = p_I \circ p_{\{1,\ldots,d\}\setminus \{i\}}$, the continuity of   $p_I$, and Lemma \ref{Lemma:ZeroPreserv}.
\end{proof}

In axiom (A5) we excluded the case $\alpha=0$. This case is already covered by the other axioms.

\begin{lemma}\label{CaseAlphaNull}
The axioms (A1), (A2), (A6) imply the statement of axiom (A5)  for $\alpha =0 $.
\end{lemma}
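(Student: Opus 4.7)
The plan is to show that if $F \equiv 0$, then $G := \delta(F) \equiv 0$, which is exactly the content of (A5) for $\alpha = 0$. The argument proceeds in three short steps that invoke (A6), (A2), and (A1) in turn.

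First, I would use (A6) to force each component $G_i$ to be constant. The zero function trivially satisfies $F = F \circ p_{\{1,\ldots,d\}\setminus\{i\}}$ for every $i \in \{1,\ldots,d\}$. Applying (A6) successively for $i=1,2,\ldots,d$ yields
\begin{equation*}
    G \;=\; G \circ p_{\emptyset},
\end{equation*}
so each $G_i$ is a constant function; write $G_i \equiv c_i$.

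Next, I would invoke (A2) to equalize these constants. For every permutation $\pi$ on $\{1,\ldots,d\}$ we have $F\circ\pi = F$, so (A2) gives $G_i = G_{\pi(i)} \circ \pi$. Since each $G_j$ is constant, this reduces to $c_i = c_{\pi(i)}$. Varying $\pi$ over all permutations forces $c_1 = c_2 = \cdots = c_d$.

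Finally, axiom (A1) supplies the additive constraint $\sum_{i=1}^d G_i = F \equiv 0$, hence $\sum_{i=1}^d c_i = 0$. Together with $c_1 = \cdots = c_d$ this yields $d \cdot c_1 = 0$, so every $c_i$ vanishes and $G \equiv 0$. The argument is a clean three-step assembly of the axioms; no substantive obstacle is anticipated, and the only mildly delicate point is the iterated use of (A6) in the first step to collapse $G$ all the way to constants rather than merely to functions independent of a single coordinate.
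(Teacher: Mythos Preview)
Your proof is correct and follows essentially the same argument as the paper: use (A6) iteratively to make each $G_i$ constant, use (A2) with $F\circ\pi=F$ to equalize the constants, and use (A1) to force them to zero. The paper's version is terser (it writes $F'=F'\circ p_\emptyset$ without spelling out the iteration of (A6)), but the logical content is identical.
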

\begin{proof}
In case of $F'=0$, it holds that $F' = F' \circ p_{\emptyset}$ and $F'=F' \circ \pi$, so the axioms  (A6) and (A2)  imply that
  $$ G'_i(x)=G'_i(\mathbf{1}) = G'_{\pi(i)} (\mathbf{1})=G'_{\pi(i)} (x)$$
  for all $x$ and $i$. Because of axiom (A1), that means that
  $$0 = F'(x) = d \, G'_i(x),$$
   which verifies that $G'= \mathbf{0}$.
\end{proof}
\begin{remark}Here we briefly interpret the axioms:
Axiom (A1) is the starting assumption of this paper and is added for completeness. Axiom (A2) says that  the decomposition principle shall be invariant with respect to any formal renumbering or relabelling of the arguments $x_1, \ldots, x_d$.  Axiom (A3) says that an argument $x_i$ that has no impact on the  function $F$ shall have a constant contribution function $G_i$. In the AS decomposition principle we even have $G_i=0$ in this case, so that the argument $x_i$ makes no contribution at all. We use the weaker postulate  $G_i=const$ in order to not rule out constant functions $F$.
The axioms (A4) and (A5) could be combined to a 'linearity axiom', including the case $\alpha=0$, see Lemma \ref{CaseAlphaNull}. Axiom (A6) says that an argument $x_i$ that has no impact on the function $F$ shall likewise have no impact on the decomposition. Axiom (A7) postulates continuity of the mapping $\delta$. Axiom (A8) says that a potential continuity of $F$ shall be inherited by $G$. Axiom (A9) postulates that the decomposition principle shall be invariant with respect to lossless data conversions of the arguments $x_1, \ldots, x_d$. The fixed point assumption makes sure that the data conversion does not shift the reference point  $\mathbf{0}$.
\end{remark}

Let $\mathcal{M}\subset \mathcal{F}$ denote the subset of Borel-measurable functions. By $\mathcal{M}_{0}$ we denote the subset of functions $F\in \mathcal{M}$ with the property $F(\mathbf{0})=0$.

\begin{theorem}\label{GenStaticDecompos}
If a decomposition principle $\delta:\mathcal{F} \rightarrow \mathcal{F}^d$ satisfies the axioms (A1) to (A9), then
\begin{align*}
  \delta|_{\mathcal{M}} = \delta^{\ast}|_{\mathcal{M}}.
\end{align*}
\end{theorem}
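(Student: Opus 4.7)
The plan is to establish the theorem in three stages: first strip off the constant part of $F$, then transport the evaluation point to $\mathbf{1}$ via axiom (A9), and finally identify the resulting functional at $\mathbf{1}$ with the Shapley mapping of Corollary~\ref{Cor:ShapleyValue}. A concluding Baire-class extension lifts the identity from continuous to Borel-measurable $F$. The main obstacle will be a well-definedness statement at $\mathbf{1}$.

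\textbf{Constant part and transport.} For $F\equiv c$, axiom (A3) makes each $G_i$ constant, (A2) equates the $d$ constants, and (A1) forces their sum to $c$; hence $\delta(c)=(c/d,\ldots,c/d)=\delta^{\ast}(c)$, so applying (A4) to the splitting $F=(F-F(\mathbf{0}))+F(\mathbf{0})$ reduces the problem to $F\in\mathcal{M}\cap\mathcal{F}_0$. Next, I would fix $F\in\mathcal{M}\cap\mathcal{F}_0$ and $x\in\mathbb{R}^d$ with every $x_i\neq 0$, and pick sign-preserving homeomorphisms $h_i$ on $\mathbb{R}$ with $h_i(0)=0$, $h_i(1)=x_i$. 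Setting $F':=F(h_1,\ldots,h_d)$, axiom (A9) gives $G(x)=G'(\mathbf{1})$, and the identity $F'\circ p_I(\mathbf{1})=F\circ p_I(x)$ for every $I$ yields $\delta^{\ast}(F)(x)=\delta^{\ast}(F')(\mathbf{1})$. Hence it suffices to prove the identity at $\mathbf{1}$ for every such $F$; points $x$ with some vanishing coordinate are then recovered from nondegenerate approximating sequences through axiom (A8), once the theorem has been established for continuous $F$.

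\textbf{Shapley identification at $\mathbf{1}$.} I would define $\hat{\varphi}(f):=\delta(F)(\mathbf{1})$ for any extension $F\in\mathcal{M}\cap\mathcal{F}_0$ of $f:\{0,1\}^d\to\mathbb{R}$ with $f(\mathbf{0})=0$. Granting well-definedness (below), axioms (T1), (T2), (T4) of Proposition~\ref{EquivToShapley} follow directly from (A1), (A2), (A4) at $\mathbf{1}$. For (T3), I would pick a continuous extension $F$ with $F=F\circ p_{U\setminus\{i\}}$ on all of $\mathbb{R}^d$; axiom (A3) then gives $G_i\equiv\gamma$, and applying (A9) with the dilations $h^n_j(t):=t/n$ for $j\neq i$ and $h^n_i:=\mathrm{id}$, the iterates $F^n:=F(h^n_1,\ldots,h^n_d)$ tend pointwise to $F(\mathbf{0})=0$ by continuity, while (A9) keeps the $i$-th coordinate of their decomposition constantly equal to $\gamma$; axiom (A7) combined with Lemma~\ref{CaseAlphaNull} (which gives $\delta(0)=0$) then forces $\gamma=0$. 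Corollary~\ref{Cor:ShapleyValue} identifies $\hat{\varphi}$ with the Shapley value, concluding Stage~3 for continuous $F$.

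\textbf{Well-definedness and the Borel extension.} The crux is to show $\delta(F)(\mathbf{1})=0$ whenever $F\in\mathcal{M}_0$ vanishes on $\{0,1\}^d$. I would first handle continuous such $F$ via an (A9)-based support-shrinking construction: pick homeomorphisms $h^n_i$ of $\mathbb{R}$ fixing both $0$ and $1$ that compress each of the open components $(-\infty,0)$, $(0,1)$, $(1,\infty)$ toward $\{0,1\}$ as $n\to\infty$. For any $x\notin\{0,1\}^d$, the image $h^n(x)$ then converges to some $v\in\{0,1\}^d$ with $F(v)=0$, so continuity of $F$ yields $F^n:=F(h^n_1,\ldots,h^n_d)\to 0$ pointwise; axiom (A9) gives $\delta(F^n)(\mathbf{1})=\delta(F)(\mathbf{1})$ for every $n$, while (A7) and Lemma~\ref{CaseAlphaNull} give $\delta(F^n)(\mathbf{1})\to 0$, forcing $\delta(F)(\mathbf{1})=0$. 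Combining the three stages yields $\delta(F)=\delta^{\ast}(F)$ for every continuous $F\in\mathcal{M}$. Finally, the class $\{F\in\mathcal{M}:\delta(F)=\delta^{\ast}(F)\}$ is closed under pointwise limits (by (A7) applied to $\delta$ and the elementary pointwise stability of the formula defining $\delta^{\ast}$) and contains the continuous functions, so by the Baire hierarchy it contains all Borel functions.
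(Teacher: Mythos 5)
Your proposal is correct, but it takes a genuinely different route from the paper's proof. The paper pins $\delta$ down on an explicit generating family: it computes $\delta(F)$ directly for the functions $F(x)=\prod_i(\max\{s_ix_i,0\})^{q_i}$ using scaling relations from (A5), (A8), (A9) together with (A1)--(A3), matches the result against \eqref{Eq:ASShapleyValue} via Shapley's combinatorial identity, then passes to monomials and polynomials by (A4)--(A5), to continuous functions by Stone--Weierstrass with (A7), and to measurable functions by iterated pointwise limits. You instead localize at the binary cube: the constant part is split off with (A4); axiom (A9) with $h_i(t)=x_it$ transports any nondegenerate evaluation point to $\mathbf{1}$; your key new ingredient is the well-definedness lemma (a continuous $F\in\mathcal{F}_0$ vanishing on $\{0,1\}^d$ has $\delta(F)(\mathbf{1})=0$), proved by the shrinking homeomorphisms fixing $0$ and $1$ together with (A7) and Lemma \ref{CaseAlphaNull}; this makes $f\mapsto\delta(F)(\mathbf{1})$ a well-defined map on games, (A1), (A2), (A4) give (T1), (T2), (T4), your dilation argument with (A3), (A9), (A7) gives (T3), and Corollary \ref{Cor:ShapleyValue} then delivers the formula at $\mathbf{1}$; degenerate points follow from (A8) for continuous $F$, and the Borel case from closure under pointwise limits plus the Baire--Lebesgue hierarchy, which is a cleaner formulation of the paper's closing limit step. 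What each buys: the paper's route is constructive and self-contained on the function side, whereas yours reuses the binary Shapley axiomatization directly, avoids Stone--Weierstrass and the monomial computations, and as written never invokes (A5) and uses (A6) only through $\delta(0)=0$. Two cosmetic repairs: drop ``sign-preserving'' in the transport step (for $x_i<0$ the homeomorphism $t\mapsto x_it$ is orientation-reversing, which (A9) permits), and note explicitly that a continuous extension with $F=F\circ p_{\{1,\ldots,d\}\setminus\{i\}}$ exists for the (T3) step, e.g.\ by interpolating $f$ on the $(d-1)$-dimensional cube and ignoring the $i$-th coordinate.
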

The relation $\delta^{\ast}|_{ \mathcal{F}_0} = \delta^{AS}$ immediately implies the following corollary.
\begin{corollary}\label{GenStaticDecompos2}
If a decomposition principle $\delta:\mathcal{F}_0 \rightarrow \mathcal{F}^d$ satisfies the axioms (A1) to (A9), then
\begin{align*}
  \delta|_{\mathcal{M}_0} = \delta^{AS}|_{\mathcal{M}_0}.
\end{align*}
\end{corollary}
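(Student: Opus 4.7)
The plan is to reduce the statement to the classical Shapley characterisation in Corollary \ref{Cor:ShapleyValue} via four successive reductions. First, writing $F = F(\mathbf{0}) + (F - F(\mathbf{0}))$, axiom (A4) splits the decomposition, and the constant summand is handled by applying (A1) and (A2) to $F \equiv F(\mathbf{0})$, which force $\delta_i(F(\mathbf{0}))=F(\mathbf{0})/d$; this matches the first term of \eqref{Eq:ASShapleyValueExt} and reduces the problem to $F \in \mathcal{M}_0$. Second, for $x \in (\mathbb{R}\setminus\{0\})^d$, I pick homeomorphisms $h_i$ of $\mathbb{R}$ fixing $0$ with $h_i(1)=x_i$, and use Lemma \ref{Lemma:ZeroPreserv} together with (A9) to rewrite $\delta(F)(x)=\delta(F\circ(h_1,\ldots,h_d))(\mathbf{1})$; the same identity holds for $\delta^{\ast}$ by Proposition \ref{ASsatisfiesA1A9}, so it suffices to prove $\delta(F)(\mathbf{1})=\delta^{\ast}(F)(\mathbf{1})$ for every $F \in \mathcal{M}_0$.

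The central step is to establish, for continuous $F \in \mathcal{M}_0$, that $\delta(F)(\mathbf{1})$ depends only on the $2^d$ values $F|_{\{0,1\}^d}$. By (A4) this reduces to showing $\delta(F)(\mathbf{1})=0$ whenever the continuous $F$ vanishes on $\{0,1\}^d$. For this I would use a product homeomorphism $h^n = (h^n_1,\ldots,h^n_d)$ with each $h^n_i$ fixing $0$ and $1$, defined piecewise by
\begin{align*}
h^n_i(y) = \begin{cases} y/n, & y \le 0, \\ y^n, & y \in [0,1], \\ 1+(y-1)/n, & y \ge 1, \end{cases}
\end{align*}
so that $h^n_i(y) \to \mathbf{1}_{\{y\ge 1\}}$ as $n\to\infty$ for every $y \in \mathbb{R}$. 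Then $h^n(x)$ converges pointwise to a vertex of $\{0,1\}^d$, and continuity of $F$ together with $F|_{\{0,1\}^d}=0$ yields $F\circ h^n \to 0$ pointwise; axiom (A7) then gives $\delta(F\circ h^n) \to 0$, while axiom (A9) gives $\delta(F\circ h^n)(\mathbf{1})=\delta(F)(h^n(\mathbf{1}))=\delta(F)(\mathbf{1})$ for each $n$ since $h^n$ fixes $\mathbf{1}$, which together force $\delta(F)(\mathbf{1})=0$. With locality established, the induced map $\tilde\varphi: F|_{\{0,1\}^d}\mapsto\delta(\bar F)(\mathbf{1})$ (for any continuous extension $\bar F$) is well-defined on $\{F:\{0,1\}^d\to\mathbb{R}, F(\mathbf{0})=0\}$, and I would verify it satisfies (T1)--(T4) of Proposition \ref{EquivToShapley}: (T1), (T2), (T4) specialise directly from (A1), (A2), (A4); for (T3), axiom (A3) first delivers constancy of $\delta_i(\bar F)$, and then the radial dilation $h^\alpha_k(y)=\alpha y$ combined with (A7), (A8), and (A9) forces $\delta_i(\bar F)(\mathbf{0})=0$ and hence the common value $\delta_i(\bar F)(\mathbf{1})=0$. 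Corollary \ref{Cor:ShapleyValue} then identifies $\delta(F)(\mathbf{1})=\delta^{\ast}(F)(\mathbf{1})$ for continuous $F\in\mathcal{M}_0$.

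The final step removes the continuity assumption and the restriction that $x$ has no zero coordinate. Any Borel measurable $F$ is a (possibly transfinite) pointwise limit of continuous functions, and the formula \eqref{Eq:ASShapleyValueExt} is manifestly continuous in $F$ under pointwise convergence; hence (A7) extends the identity to all $F\in\mathcal{M}$. For $x$ with some $x_i=0$, axiom (A8) propagates the identity from $(\mathbb{R}\setminus\{0\})^d$ to a limit point whenever $F$ is continuous at $x$, and a further pointwise approximation by continuous functions combined with (A7) closes the gap for general Borel $F$.

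The hardest step is the locality claim: axioms (A1)--(A6) alone leave $\delta$ underdetermined on general $F \in \mathcal{M}$, and none of the purely algebraic or symmetry axioms can force the missing freedom to vanish. The critical continuous input is the application of (A9) to the three-piece deformation $h^n$ above, whose coordinates fix both the reference point $0$ and the evaluation point $1$ while contracting every other real number to one of those two; designing $h^n_i$ so that negative $y$, $y\in(0,1)$, and $y>1$ are handled uniformly, combined with the availability of (A7) to pass to the limit, is the main technical subtlety of the argument.
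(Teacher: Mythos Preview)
Your approach is correct and takes a genuinely different route from the paper. The paper derives the corollary from Theorem~\ref{GenStaticDecompos}, whose proof builds $\delta^{\ast}$ from below: it first pins down $\delta$ on the one-sided monomials $\prod_i (\max\{s_i x_i,0\})^{q_i}$ using (A2), (A3), (A5), (A8), (A9), extends to polynomials via (A4) and (A5), to continuous functions via Stone--Weierstrass and (A7), and finally to Borel functions by iterated pointwise limits and (A7). Your route is top-down: the contracting homeomorphisms $h^n$ (which fix both $0$ and $1$) combined with (A9) and (A7) yield a \emph{locality} principle --- $\delta(F)(\mathbf{1})$ for continuous $F\in\mathcal{F}_0$ depends only on $F|_{\{0,1\}^d}$ --- after which Corollary~\ref{Cor:ShapleyValue} applies directly. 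This is more conceptual and entirely avoids the explicit monomial calculations; the paper never isolates such a locality statement and instead verifies \eqref{Eq:ASShapleyValueExt} on an explicit generating family. Both arguments share the same final step (iterated (A7) to reach Borel functions) and the same level of hand-waving there.

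Two small points. Your opening reduction (splitting off the constant $F(\mathbf{0})$) is vacuous for this corollary, since $\delta$ is only defined on $\mathcal{F}_0$; as written, it would require applying $\delta$ to a nonzero constant, which lies outside the domain. Simply drop that paragraph. Second, in the (T3) verification you need the continuous extension $\bar F$ to satisfy $\bar F=\bar F\circ p_{U\setminus\{i\}}$ globally before (A3) applies; this is easy to arrange (extend $F$ restricted to the remaining $d-1$ binary coordinates and make the result independent of $x_i$), but should be said explicitly, since locality alone only tells you the value $\delta(\bar F)(\mathbf{1})$ is independent of the choice of extension.
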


\begin{proof}[Proof of Theorem \ref{GenStaticDecompos}]
First of all, we consider a constant function $F$. Then  we have $F = F \circ \pi$ and $F=F \circ p_I$ for any permutation $\pi$ and $I \subseteq   \{1,\ldots,d\}$, so that the axioms (A1), (A2), (A6)  imply that
\begin{align*}
  F(x)= G_1(x) + \cdots + G_d(x) =  G_1(\mathbf{0}) + \cdots + G_d(\mathbf{0}) =d\,  G_j(\mathbf{0})
\end{align*}
for each $j \in \{1,\ldots,d\}$. This verifies \eqref{Eq:ASShapleyValueExt} for constant functions.

In a second step, we consider functions of type
\begin{align}\label{FAsPolynimial0}
  F(x) = \prod_{i=1}^d (\max\{ s_i x_i,0 \})^{q_i}
\end{align}
for  $ q \in\mathbb{N}_0^d$ and  $s \in \{1,-1\}^d$. The case $ q  =\mathbf{0}$ has been already covered above, so let now $q \neq \mathbf{0}$.
At first, we just consider  $q \in \{0,1\}^d$ and $s= \mathbf{1}$.  For $q$ arbitrary but fixed, we define
 $I_1= \{ i: q_i= 1\}$ and $I_0 = \{ i: q_i=0\}$. Let $\pi$ be a permutation with the property $\pi(I_1)= I_1$. Let $y \in \mathbb{R}^d$ be a vector such that $\pi(y)=y$. From axiom (A2) we can conclude that
$G_i(y) = G_{\pi(i)}(y)$ for all $i \in \{1,\ldots, d\}$. This fact and axioms (A1) and (A3) imply that
\begin{align}\label{RepresFcTwoParts}\begin{split}
   F (y)&= G_1 (y) +  \ldots +  G_d (y) \\
   &= (d-\| q\|_1) G_i (y) +\| q\|_1 G_j (y)\\
   &= (d-\| q\|_1) G_i (\mathbf{0}) +\| q\|_1 G_j (y), \quad i \in I_0, j \in I_1.
\end{split}\end{align}
For $h_1, \ldots, h_d$ defined by $ h_i(x_i):= \beta_i x_i $ for $\beta_i \neq 0$, it holds that
$$F(h_1, \ldots, h_d) = \Big( \prod_{i \in I_1} \beta_i \Big) F,$$
so that axiom (A9) and  axiom (A5) with $\alpha =  \prod_{i\in I_1} \beta_i $  imply that
\begin{align}\label{RepresFux}
  G(\beta_1 x_1 , \ldots,  \beta_d x_d) =  \Big( \prod_{i \in I_1}\beta_i \Big) G(x) \quad \forall x.
\end{align}
For $\beta^n = (1, \ldots, 1,\varepsilon_n,1 \ldots, 1) $ with the variable $\varepsilon_n$ at the $i$-th position and $\varepsilon_n \downarrow 0$ for $n \rightarrow \infty$, equation  \eqref{RepresFux} and
axiom (A8) yield
\begin{align*}
G( p_{\{1,\ldots,d\}\setminus \{i\}}(x))
= \lim_{n \rightarrow \infty} \varepsilon_n   G( x )=0, \quad i \in I_1,\, x \in \mathbb{R}^d,
\end{align*}
since $F$ is continuous. By repeating this type of argument for each $i  \in I_1$ and by applying axiom (A6),  we conclude that
\begin{align}\label{GbeiNull}
G(x)=0 \; \forall x : x_1 x_2 \cdots x_d=0 .
\end{align}
Furthermore, because of axiom (A3) it holds that
\begin{align*}
G_j(x)= G_j(\mathbf{0}) =0, \quad \forall x, j \in I_0,
\end{align*}
so that  equation \eqref{RepresFcTwoParts} becomes
\begin{align}\label{RepresFcOnePart}
     F (y)= \| q\|_1 G_j (y)  , \quad  j  \in I_1.
\end{align}
Let $z$ be a vector such that $z_i \neq 0$  for all $i \in I_1$.  Then, for $h_1, \ldots, h_d$ defined by $h_i(x_i):=\beta_ix_i$ with  $\beta_i = z_i$ for $i \in I_1$ and $\beta_i=1$ for $i \in I_0$,  axiom (A6) and  the equations \eqref{RepresFux} and \eqref{RepresFcOnePart} yield
 \begin{align*}
     G_j (z) &=G_j(\beta_1 , \ldots, \beta_d )\\
      &= \Big( \prod_{i \in I_1} \beta_i \Big)  G_j (\mathbf{1})\\
      &=  \frac{1}{\| q\|_1}  \Big( \prod_{i \in I_1} \beta_i \Big) F(\mathbf{1})\\
      & = \frac{1}{\| q\|_1}  F(z) , \quad j \in I_1,
\end{align*}
since $(\prod_{i \in I_1} \beta_i )F(\mathbf{1}) =F(z)$. All in all, from the latter equation and \eqref{GbeiNull} we conclude that
\begin{align}\label{RepresGOnePart}
     G_j (x) = q_j \frac{1}{\| q\|_1} F(x)  \; \forall x, j.
\end{align}
For $j \in I_0$ the latter equation equals \eqref{Eq:ASShapleyValue}, since  all addends are zero. Now suppose that $j \in I_1$. Since $F( p_I(x))  - F( p_{I \setminus \{j\}}(x)) \neq 0$ only if $ I \supseteq I_1$ and since $F( p_I(x))  - F( p_{I \setminus \{j\}}(x)) = F( x) $ for all $ I \supseteq I_1$, by using  $\mathbf{1}_{I \supseteq I_1}= F (p_I(\mathbf{1}))$ we can show that
\begin{align*}
  &   \frac{1}{d} \sum_{I \subseteq \{1, \ldots,d\}  } \binom{d-1}{\| I|-1}^{-1} \big( F( p_I(x))  - F( p_{I \setminus \{i\}}(x))\big)\\
  &=  \frac{1}{d} \sum_{I \supseteq I_1  } \binom{d-1}{| I|-1}^{-1} F(x)\\
  &=  F(x) \frac{1}{d} \sum_{I } \binom{d-1}{| I|-1}^{-1} \mathbf{1}_{ I \supseteq I_1}\\
  &=  F(x) \frac{1}{\|q\|_1},
\end{align*}
where we refer to Shapley (1953, section 3) for the last equality. This verifies \eqref{Eq:ASShapleyValue} for functions  \eqref{FAsPolynimial0} with $q \in \{0,1\}^d$ and $s= \mathbf{1}$.  We can expand that result to general exponents $q \in \mathbb{N}_0$ by applying axiom (A9) for  $h_1, \ldots, h_d$ defined by $$h_i(x_i) :=  \mathrm{sign}(x_i)^{ \min \{ q_i ,1\}} |x_i|^{\max \{ q_i,1\} }.$$ Moreover, we  can expand our result to any $s \in \{1,-1\}^d$ by applying axiom (A9) with $h_i(x_i):= s_i x_i$.

In a next step we consider any function $F$ of type
  \begin{align}\label{FAsPolynimial01}
F(x)=  \prod_{i=1}^d x_i^{q_i}
\end{align}
for $q\in \mathbb{N}_0$, which can be represented as a linear combination of functions of type \eqref{FAsPolynimial0},
\begin{align*}
F(x) = \sum_{s \in \{1,-1\}^d} \Big(\prod_{i=1}^d s_i^{q_i} \Big) \Big( \prod_{i=1}^d (\max\{ s_i x_i,0 \})^{q_i}\Big),
\end{align*}
because of $$x_i^{q_i}= \bigg(\sum_{s_i \in \{1,-1\} } s_i \max\{s_ix_i,0\}\bigg)^{q_i}=\sum_{s_i \in \{1,-1\} } s_i^{q_i} (\max\{s_ix_i,0\})^{q_i}.$$ Now we apply axioms (A4) and (A5) in order to verify   \eqref{Eq:ASShapleyValue} for functions of type \eqref{FAsPolynimial01}.

In our next step let $F$ by any polynomial, i.e.~$F$ is a linear combination of functions of type \eqref{FAsPolynimial01}. Then axioms (A4) and (A5) imply  \eqref{Eq:ASShapleyValue}.

Now let $F$ be any continuous function. According to the  Stone-Weierstrass theorem, for each $\varepsilon_n>0$ and $b_n>0$ there exists a polynomial function $F^n$ such that $\sup_{x \in [-b_n,b_n]^d} |F(x)-F^n(x)| < \varepsilon_n$. For example, approximate  $F|_{[-b_n,b_n]^d}$ by Bernstein polynomials. For  sequences $\varepsilon_n \rightarrow 0$ and $b_n \rightarrow \infty$, we can construct a sequence of polynomial approximations $(F^n)_{n \in \mathbb{N}}$ that converges pointwise to $F$. For each polynomial $F^n$, the  formula \eqref{Eq:ASShapleyValue} applies, which is linear in $F^n$, so that the limit $ \lim_{n \rightarrow \infty} G^n$ exists as a pointwise limit. According to axiom (A7), this verifies \eqref{Eq:ASShapleyValue} for each continuous function $F$.

By iteratively repeating the latter step of building limits of sequences of functions and applying axiom (A7), starting from the set of continuous functions, we obtain \eqref{Eq:ASShapleyValue} for each step function and finally for each measurable function.
\end{proof}

\section{Examples}

The three examples in this section illustrate that model outputs are typically  Borel measurable functions, so that the axioms (A1) to (A9) imply unique decompositions,  see Theorem \ref{GenStaticDecompos}.

\begin{example}\label{Exp1}
Consider the gains and losses between time $0$ and time $1$ of a stock in foreign currency,
\begin{align*}
  F(x_1,x_2) = (x_1+s_0)(x_2+c_0)- s_0c_0,
\end{align*}
where $s_0$ and $s_1=s_0+x_1$ are the stock values in foreign currency at times $0$ and $1$, and $c_0$ and $c_1=c_0+x_2$ are the currency exchange factors into home currency. The gains and losses  shall be explained from the variables $x_1=s_1-s_0$ and $x_2=c_1-c_0$.
The function $F$
is Borel measurable and has the property $F(0,0)=0$, so Corollary \ref{GenStaticDecompos2} suggests for $F(x_1,x_2)$ the decomposition
\begin{align*}
  G_1(x_1,x_2) = \frac{x_1x_2}{2}+  x_1c_0, \quad G_2(x_1,x_2) = \frac{x_1x_2}{2}+  x_2s_0.
\end{align*}
\end{example}
The decomposition principle $\delta^{\ast}$ expands the AS decomposition principle to functions $F$ that are not necessarily zero at zero. This is relevant in the following example.
\begin{example}\label{Exp2}
Let $x_1, \ldots, x_d$ be the electricity meter readings of  $d \in \mathbb{N}$ individuals in a shared housing community who share a single utility contract.  The total utility bill  is given by an increasing  cost function $f$ of the total electricity consumption,
$$ F(x_1, \ldots, x_d) = f( x_1 + \ldots +x_d).$$
The mapping  $f$ may be non-linear due to volume discounts.
The total bill shall be split among the $d$ individuals according to their electricity meter readings $x_1, \ldots, x_d$.
Since the cost function $f$ was assumed to be increasing, the function $F$ is Borel measurable, so Theorem \ref{GenStaticDecompos} suggests to decompose $F$ by $\delta^*(F)$. The first addend
$$\frac{1}{d}F(\mathbf{0})= \frac{1}{d}f(0)$$
in \eqref{Eq:ASShapleyValueExt} describes consumption-independent fixed costs, which are  evenly split between the $d$ individuals. The remaining part of \eqref{Eq:ASShapleyValueExt} equals the AS decomposition $\delta^{AS}(F)$ and addresses consumption-dependent costs.
\end{example}

\begin{example}\label{Exp3}
Profits and losses that emerge in an insurer's balance sheet between two reporting dates  stem from various risk sources. International reporting standards as well as insurance regulation require a change analysis of the insurer's balance sheet with the aim to  identify and quantify the sources of the observed profits and losses. For example, let $C$  be a random variable that describes  future  insurance claims that are evaluated by  the risk measure Value  at Risk to the level $99.5 \%$.
  Suppose that the insurer's risk model comprises $d$ risk factors, and let the real-valued random variables  $X_1, \ldots, X_d$  describe changes in these risk factors from one reporting date to the next.
    So the evaluation of claim $C$ changes between the reporting dates by
\begin{align*}
  F(x_1, \ldots, x_d) =  \mathrm{VaR}_{0.995}\big[C\big| X_1=x_1, \ldots, X_d=x_d\big] -  \mathrm{VaR}_{0.995}\big[C \big].
\end{align*}
This value change shall  be explained from  the $d$ risk sources. According to the factorization lemma, $F$ is Borel measurable, since  $\mathrm{VaR}_{0.995}[C| X] -  \mathrm{VaR}_{0.995}[C ]$ is a $\sigma(X)$-measurable random variable. So our axioms (A1) to (A9) uniquely define a decomposition of $F$, see Theorem \ref{GenStaticDecompos}.
\end{example}




\section*{References}

\bigskip {\small
\begin{list}{}{\leftmargin1cm\itemindent-1cm\itemsep0cm}
%



\item{Candland, A., Lotz, C., 2014. Profit and Loss attribution.
		In: Internal Models and Solvency II -- From Regulation to Implementation, Risk Books, London.}

\item{Fortin, N., Lemieux, T., Firpo, S., 2011. Decomposition methods in economics.
		 In: Handbook of labor economics. Elsevier, p.~1-102.}


\item{Godin, F., Hamel, E., Gaillardetz, P., Hon-Man Ng, E., 2022. Risk allocation through Shapley decompositions with applications to variable annuities  Available at SSRN: http://dx.doi.org/10.2139/ssrn.4192115.}

\item{Guo, Q., Bauer, D., Zanjani, G., 2021. Capital allocation techniques: Review and comparison. Variance 14(2).}

\item{Junike, G., Stier, H., Christiansen, M.C., 2023. Sequential decompositions at their limit.
ArXiv preprint, arXiv:2212.06733v2.}

\item{Merrick, L., Taly, A., 2020. The Explanation Game: Explaining Machine Learning Models Using Shapley Values. In: Holzinger, A., Kieseberg, P., Tjoa, A., Weippl, E.~(eds) Machine Learning and Knowledge Extraction. CD-MAKE 2020. Lecture Notes in Computer Science, vol 12279. Springer, Cham.}

\item{Moehle, N., Boyd, S., Ang, A., 2021.  Portfolio performance attribution via Shapley value. ArXiv preprint, arXiv:2102.05799.}

\item{Shorrocks, A.F., 2013. Decomposition procedures for distributional analysis: a
			unified framework based on the Shapley value. Journal of Economic Inequality 11(1), 99--126.}

\item{Shapley, Lloyd S., 1953. A Value for n-person Games. In Kuhn, H. W.; Tucker, A. W. (eds.). Contributions to the Theory of Games. Annals of Mathematical Studies. Vol. 28. Princeton University Press,  pp.~307–317.}

\end{list}}

\end{document}

\bibitem[Albrecht et al.(2002)]{albrecht2002shapley}
		Albrecht, J., Fran{\c{c}}ois, D., Schoors, K.: {A Shapley decomposition of carbon emissions without residuals}.
		\newblock Energy policy \textbf{30}(9), 727--736 (2002)
		
		\bibitem[Biewen(2012)]{biewen2012additive}
		Biewen, M.: {Additive decompositions with interaction effects}.
		\newblock IZA Discussion Paper  (2012)
		
		\bibitem[Blinder(1973)]{blinder1973wage}
		Blinder, A.S.: {Wage discrimination: reduced form and structural estimates}.
		\newblock J. Hum. Resour. 436--455 (1973)
		
		\bibitem[Brigo and Mercurio(2001)]{brigo2001interest}
		Brigo, D., Mercurio, F.: {Interest rate models: theory and practice}, vol.~2.
		\newblock Springer, New York (2001)
		
		\bibitem[Candland and Lotz(2014)]{cadoni2014internal}
		Candland, A., Lotz, C.: {Profit and Loss attribution}.
		\newblock In: Internal Models and Solvency II -- From Regulation to Implementation, Risk Books, London (2014)
		
		\bibitem[Carr and Wu(2020)]{carr2020option}
		Carr, P., Wu, L.: {Option profit and loss attribution and pricing: a new
			framework}.
		\newblock J. Finance \textbf{75}(4), 2271--2316 (2020)
		
		\bibitem[{\v{C}}ern{\`y} and Ruf(2021)]{vcerny2021pure}
		{\v{C}}ern{\`y}, A., Ruf, J.: {Pure-jump semimartingales}.
		\newblock Bernoulli \textbf{27}(4), 2624--2648 (2021)
		
		\bibitem[Christiansen(2022)]{christiansen2022decomposition}
		Christiansen, M.C.: {On the decomposition of an insurer's profits and losses}.
		\newblock Scand. Actuar. J. 1--20 (2022)
		
		\bibitem[Denault(2001)]{denault2001coherent}
		Denault, M.: {Coherent allocation of risk capital}.
		\newblock J. Risk \textbf{4}, 1--34 (2001)
		
		\bibitem[DiNardo et al.(1996)]{dinardo1996labor}
		DiNardo, J., Fortin, N., Lemieux, T.: {Labor market institutions and the distribution of wages, 1973-1992: A semiparametric approach}.
		\newblock Econometrica \textbf{64}(5), 1001--1044 (1996)\\
		
		\bibitem[Fischer(2004)]{fischer2004decomposition}
		Fischer, T.: {On the decomposition of risk in life insurance}.
		\newblock Working Paper, Technische Universit\"{a}t Darmstadt (2014)
		
		\bibitem[Flaig and Junike(2022)]{flaig2022scenario}
		Flaig, S., Junike, G.: {Scenario generation for market risk models using
			generative neural networks}.
		\newblock Risks \textbf{10}(11), 199 (2022)
		
		\bibitem[F{\"o}llmer and Schied(2016)]{follmer2016stochastic}
		F{\"o}llmer, H., Schied, A.: {Stochastic Finance}.
		\newblock de Gruyter, Berlin (2016)
		
		\bibitem[Forster(2017)]{forster2017analysis}
		Forster, O.: {Analysis 2: Differentialrechnung im $\mathbb{R}^n$,
			gew{\"o}hnliche Differentialgleichungen}.
		\newblock Springer-Verlag, New York (2017)
		
		\bibitem[Fortin(2011)]{fortin2011}
		Fortin, N., Lemieux, T., Firpo, S.: {Decomposition methods in economics}.
		\newblock In Volume 4, Part A of
Handbook of Labor Economics. Elsevier 10, P. Amsterdam: S0169-7218. (2011)
		
		\bibitem[Frei(2020)]{frei2020new}
		Frei, C.: {A new approach to risk attribution and its application in credit risk analysis}.
		\newblock Risks \textbf{8}(2), 65 (2020)
		
		\bibitem[Gatzert and Wesker(2014)]{gatzert2014mortality}
		Gatzert, N., Wesker, H.: {Mortality risk and its effect on shortfall and risk management in life insurance}.
		\newblock J. Risk Insur. \textbf{81}(1), 57--90 (2014)
		
		\bibitem[Gr\"{u}nbaum(2013)]{gruenbaum2013convex}
		Gr\"{u}nbaum, B.: Convex polytopes.
		\newblock In: {Graduate Texts in Mathematics}, vol. 221. Springer, New York
		(2013)
		
		\bibitem[Jetses and Christiansen(2022)]{jetses2022general}
		Jetses, J., Christiansen, M.C.: {A general surplus decomposition principle in
			life insurance}.
		\newblock Scand. Actuar. J. 1--25 (2022)

		\bibitem[Mai(2023)]{mai2022}
		Mai, J.F.: {Performance attribution w.r.t. rates, FX, carry, and residual market risks}.
		\newblock arXiv preprint arXiv:2302.01010 (2023).

		\bibitem[Norberg(1999)]{norberg1999theory}
		Norberg, R.: {A theory of bonus in life insurance}.
		\newblock Finance Stoch. \textbf{3}(4), 373--390 (1999)
		
		\bibitem[Oaxaca(1973)]{oaxaca1973male}
		Oaxaca, R.: {Male-female wage differentials in urban labor markets}.
		\newblock Int. Econ. Rev. 693--709 (1973)
		
		\bibitem[Powers(2007)]{powers2007using}
		Powers, M.R.: {Using Aumann-Shapley values to allocate insurance risk: the case of inhomogeneous losses}.
		\newblock N. Am. Actuar. J. \textbf{11}(3), 113--127 (2007)
		
		\bibitem[Protter(1990)]{protter1990stochastic}
		Protter, P.: {Stochastic differential equations}.
		\newblock In: {Stochastic Integration and Differential Equations}. Springer, New York (1990)
		
		\bibitem[Ramlau-Hansen(1991)]{ramlau1991distribution}
		Ramlau-Hansen, H.: {Distribution of surplus in life insurance}.
		\newblock ASTIN Bull. \textbf{21}(1), 57--71 (1991)
		
		\bibitem[Rosen and Saunders(2010)]{rosen2010risk}
		Rosen, D., Saunders, D.: {Risk factor contributions in portfolio credit risk models}.
		\newblock J. Bank. Financ. \textbf{34}(2), 336--349 (2010)
		
		\bibitem[Sastre and Trannoy(2002)]{sastre2002shapley}
		Sastre, M., Trannoy, A.: {Shapley inequality decomposition by factor components: Some methodological issues}.
		\newblock J. Econ. \textbf{9}(1), 51--89 (2002)
		
		\bibitem[Schilling et al.(2020)]{schilling2020decomposing}
		Schilling, K., Bauer, D., Christiansen, M.C., Kling, A.: {Decomposing dynamic
			risks into risk components}.
		\newblock Manag. Sci. \textbf{66}(12), 5738--5756 (2020)
		
		\bibitem[Shalit(2020)]{shalit2020shapley}
		Shalit, H.: {The Shapley value of regression portfolios}.
		\newblock J. Asset Manag. \textbf{21}(6), 506--512 (2020)

		
		\bibitem[Young(1985)]{young1985monotonic}
		Young, H.P.: {Monotonic solutions of cooperative games}.
		\newblock Int. J. Game Theory \textbf{14}(2), 65--72 (1985)
		
		\bibitem[Yu et al.(2014)]{yu2014provincial}
		Yu, S., Wei, Y., Wang, K.: {Provincial allocation of carbon emission reduction targets in China: An approach based on improved fuzzy cluster and Shapley value decomposition}.
		\newblock Energy policy \textbf{66}, 630--644 (2014)